\documentclass[letterpaper, 10 pt, conference]{ieeeconf}  

\IEEEoverridecommandlockouts                              

\usepackage[utf8]{inputenc}
\usepackage{amsmath}
\usepackage{amssymb}

\usepackage{amsthm}
\usepackage{subcaption}
\usepackage{amsfonts}
\usepackage{mathtools}
\usepackage[hyphens]{url}
\usepackage{hyperref}
\usepackage{bm}
\usepackage{enumerate}
\usepackage{xcolor}
\usepackage{graphicx}
\usepackage{subcaption}
\usepackage{siunitx}
\usepackage[font=small]{caption}

\newtheorem{dfn}{Definition}
\newtheorem{thm}{Theorem}
\newtheorem{cor}{Corollary}

\newcommand{\cset}[2]{\left\{ #1 \; \middle| \; #2 \right\}}
\newcommand{\ddt}{\frac{\mathrm{d}}{\mathrm{d}t}}
\newcommand{\ddz}[1]{\left. \frac{\mathrm{d}}{\mathrm{d} #1} \right|_{#1 = 0}}

\newcommand{\nR}[1]{\mathbb{R}^{#1}}		
\newcommand{\upperRomannumeral}[1]{\uppercase\expandafter{\romannumeral#1}}	

\newcommand{\image}[1]{\text{Im}\left( #1 \right)}

\newtheorem{problem}{Problem}

\newtheorem{lemma}{Lemma}

\title{\LARGE \bf
The Geometry of Coordinated Trajectories for Non-stop Flying Carriers Holding a Cable-Suspended Load
}

\author{Pieter van Goor$^{1}$, Chiara Gabellieri$^{2}$ and Antonio Franchi$^{2,3}$
\thanks{$^1$ School of Aerospace, Mechanical, and Mechatronic Engineering (AMME), Faculty of Engineering, University of Sydney, NSW, 2006, Australia. {\tt\footnotesize pieter.vangoor@sydney.edu.au}}
\thanks{$^2$ Robotics and Mechatronics Department, Electrical Engineering,  Mathematics, and Computer Science (EEMCS) Faculty, University of Twente, 7500 AE Enschede, The Netherlands. {\tt\footnotesize c.gabellieri@utwente.nl}, {\tt\footnotesize schol@r-franchi.eu}}
\thanks{$^3$ Department of Computer, Control and Management Engineering, Sapienza University of Rome, 00185 Rome, Italy, {\tt\footnotesize schol@r-franchi.eu}}
}

\begin{document}

\maketitle
\thispagestyle{empty}
\pagestyle{empty}

\begin{abstract}
This work considers the problem of using multiple aerial carriers to hold a cable-suspended load while remaining in periodic motion at all times.
Using a novel differential geometric perspective, it is shown that the problem may be recast as that of finding an immersion of the unit circle into the smooth manifold of admissible configurations.
Additionally, this manifold is shown to be path connected under a mild assumption on the attachment points of the carriers to the load.
Based on these ideas, a family of simple linear solutions to the original problems is presented that overcomes the constraints of alternative solutions previously proposed in the literature.
Simulation results demonstrate the flexibility of the theory in identifying suitable solutions.
\end{abstract}

\section{Introduction}
Aerial robotic manipulation has identified cables as privileged tools for their light weight and affordability.
To overcome the robustness and payload limitations of a single-robot approach and to enable the full-pose control of the object, cooperative systems have been proposed, where multiple flying carriers collaborate to manipulate a cable-suspended rigid body. 
Two carriers are used in~\cite{pereira2018asymmetric, Chen2019, gabellieri2023equilibria} to manipulate slender objects, while \cite{ Masone2016, Sanalitro2020,li2021cooperative} use multiple carriers to sustain rigid-body loads. 
Most of the literature on cable-suspended load manipulation employs \textit{multirotors} as flying carriers and, consequently, the proposed manipulation strategies rely on the assumption that the carriers can hover in mid-air.
However, multirotors are constrained by limited endurance \cite{leutenegger2016flying} that hinders their applicability to long-distance transport.
In contrast, fixed-wing Uncrewed Aerial Vehicles (UAVs) offer superior aerodynamic efficiency and endurance \cite{leutenegger2016flying} but are subject to the critical constraint of maintaining a non-zero forward speed.
Such a constraint makes the manipulation especially challenging when the task imposes a constant pose of the load, while the carriers must remain in motion at all times.

Motivated by the potential and efficiency of fixed-wing UAVs, recent work has considered enabling cooperative manipulation of cable-suspended load using \textit{non-stop flying carriers}; \cite{williams2009dynamics} theorizes the transport of a \textit{point-mass} load tethered to two \textit{non-stoppping carriers}, and considers the system kinematics to plan the two carriers' trajectories; \cite{quenneville2023experimental} experimentally shows load lifting for the same system.
Manipulation of a generic rigid-body load with non-stop flying carriers has been addressed firstly in \cite{gabellieri2024existence}. 
The method builds upon the load internal-force redundancy: it ensures that the cables exert the required total wrench to maintain the object pose, while acting on the internal forces to guarantee non-stop carrier motion without perturbing the total wrench on the object. 
Especially, in \cite{gabellieri2024existence}, the authors show that three is the minimum number of non-stop flying carriers to maintain the constant pose of a rigid body load. 
The work also provides a method to generate the sought coordinated orbits of the carriers. 
In \cite{gabellieri2025coordinated}, the results have been extended to a general number of carriers greater than or equal to three. 
However, the proposed method limits the possible carrier trajectories by constraining the internal force component of each cable on a 2D plane. 
This simplifies the trajectory generation method and allows for direct extension of the three-carrier method proposed in \cite{gabellieri2024existence}.
However, it also constrains the carrier trajectories on a subset of all the possible ones, potentially failing to exploit the full potential of the manipulation system.

In this work, we study the problem of using multiple carriers to apply a given wrench (force and torque) to a rigid body load through cables at fixed attachment points, while maintaining nonzero velocity along smooth periodic trajectories.
We propose a new differential geometric perspective on the problem and show that the problem of finding suitable trajectories is naturally recast to that of finding smooth immersions between manifolds.
In contrast to previous work, our analysis does not impose a particular type of solution, and the full potential freedom of the manipulation system can be exploited.
We show that the manifold of admissible configurations is connected, meaning that there exist trajectories connecting any two desired configurations.
We also show that any individual carrier is able to instantly move in any given direction while the other carriers compensate, subject to a mild assumption on the attachment point geometry.
Finally, based on the preceding theory, we propose a simple `linear' method for generating admissible periodic trajectories, and we provide simulation results demonstrating how it can create trajectories impossible through prior approaches.

\subsection{Notation}

The norm of a given vector $x \in \mathbb{R}^n$ is denoted $\Vert x \Vert$.
The unit sphere is defined as 
\begin{align*}
    S^2 = \{ x \in \mathbb{R}^3 \mid \Vert x \Vert = 1 \} \subset \mathbb{R}^3.
\end{align*}
The projection $\pi : \mathbb{R}^3 \setminus \{ 0 \} \to S^2$ of any nonzero vector to the sphere is defined by
\begin{align}
    \pi(x) = \frac{x}{\Vert x \Vert}.
\end{align}
For any smooth map between smooth manifolds $h : \mathcal{M} \to \mathcal{N}$, the differential at $x \in \mathcal{M}$ is the linear map between tangent spaces, denoted $\mathrm{D} h (x) : \mathrm{T}_x \mathcal{M} \to \mathrm{T}_{h(x)} \mathcal{N}$ such that for any
direction $\dot{x} \in \mathrm{T}_x \mathcal{M}$, the derivative of $h$ at $x$ in the direction $\dot{x}$ is given by
\begin{align*}
    \mathrm{D} h (x) [\dot{x}] = \ddz{t} h(x(t)),
\end{align*}
where $x(t)$ is any differentiable curve in $\mathcal{M}$ such that $x(0) = x$ and $\dot{x}(0) = \dot{x}$.
A smooth map between manifolds $h : \mathcal{M} \to \mathcal{N}$ is called an \emph{immersion} if its differential $\mathrm{D} h(x)$ is injective for every $x \in \mathcal{M}$.

\section{Problem Description}

Consider a manipulated \emph{load}, modelled as a rigid body, attached by cables to $n \geq 3$ \emph{carriers}.
Define the load body frame $\{ O \}$ to coincide with the centre of mass of the load.
For each cable $i=1,\ldots,n$, the position of its \emph{attachment point} on the load, expressed in the frame $\{ O \}$, is given by a constant $b_i \in \nR{3}$.
We assume that each cable has negligible mass and is taut (has positive tension).
Then the position $p_i\in \mathbb{R}^3$ of each carrier, expressed in $\{O\}$, is determined by the direction of the cable and its length, that is, \begin{equation}p_i = b_i + q_i l_i,\label{eq:kinematics}\end{equation} where $l_i > 0$ is the length of the cable and $q_i\in S^2$ represents the direction of the cable expressed in $\{O\}$.
By assumption, each cable has a positive tension $T_i > 0$, and thus the force exerted by the cable $i$ at the corresponding attachment point is $f_i = T_i q_i \in \nR{3}$, also expressed with respect to the load frame $\{ O \}$.

The total wrench (combined force and torque) experienced by the load due to the cables is given by
\begin{align}\label{eq:wrench_equation}
    w &= G f, \notag \\
    \begin{pmatrix}
        f_{\text{CoM}} \\
        \tau_{\text{CoM}}
    \end{pmatrix}
    &= \begin{pmatrix}
        I_3 & \cdots & I_3 \\
        b_1^\times & \cdots & b_n^\times
    \end{pmatrix}
    \begin{pmatrix}
        f_1 \\ \vdots \\ f_n
    \end{pmatrix}.
\end{align}

The problem we seek to address is that of generating a constant wrench such that the load remains static, while each carrier continues moving (non-stopping carriers problem, introduced in~\cite{gabellieri2023equilibria}).
In addition to the force in each cable, the load is acted on by gravity, which is expressed in the frame $\{O\}$ by $m g \eta$, where $\eta \in S^2$ is the direction of gravity expressed in $\{O\}$.
If the load is static, then $\eta$ is constant, and thus the wrench $w$ required to cancel the gravity force is constant as well.
Since the position of the carrier $i$ is given by $l_i q_i$, its velocity is given by $\dot{q}_i l_i$.
Formally, the problem considered is as follows. 
\begin{problem}\label{problem}
Given a desired wrench $w \in \nR{6}$, find trajectories for $q_1,...,q_n$ and $T_1,...,T_n$ such that
\begin{enumerate}
    \item The trajectory of each $q_i$ and $T_i$ is smooth and periodic.
    \item The wrench exerted by the carriers \eqref{eq:wrench_equation} is equal to the desired wrench.
    \item The tension $T_i$ in each cable is always positive.
    \item The velocity in each carrier is always non-zero.
\end{enumerate}
\end{problem}

\section{Geometric Description of Admissible Force Configurations}

In this section, we show that the problem of finding suitable periodic trajectories is equivalent to that of finding immersions of the unit circle into a particular smooth manifold.

\subsection{Admissible Forces}

Based on our problem description, we create the following definition for an admissible configuration of forces.

\begin{dfn}
    A configuration of forces $f = (f_1,...,f_n) \in \nR{3n} $ is called \emph{admissible} if
    \begin{enumerate}
        \item The desired wrench is exerted on the load, $G f = w$.
        \item Each individual force is nonzero, $f_i \neq 0$ for each $i$.
    \end{enumerate}
\end{dfn}

Let $G^\dag$ denote the pseudoinverse of $G$, and fix a basis matrix $N \in \nR{3n\times k}$ for the nullspace of $G$, where $k$ is the dimension of the nullspace.
As we show in the following theorem, the set of admissible forces defines a smooth submanifold in $\nR{k}$.
The rank of $G$ depends on the geometry of the attachment points $b_i$, but we will assume it is full-rank, meaning the nullspace dimension is $k = 3n-6$.
Without loss of generality, we will choose $N$ to be an orthogonal matrix, so that $N^\top N = I_k$.

\begin{thm}
Define the manifold 
    \begin{align*}
    \mathcal{M}_\lambda := \cset{\lambda \in \nR{k}}{P_i G^\dag w + P_i N \lambda \neq 0},
\end{align*}
where $P_i \in \nR{3\times 3n}$ is given by
\begin{align*}
    P_i = \begin{pmatrix}
        0_{3\times 3} & \cdots & 0_{3\times 3}
        & \underbrace{I_3}_{\text{$i^\text{th}$ block}} & 
        0_{3\times 3} & \cdots & 0_{3\times 3}
    \end{pmatrix}.
\end{align*}
Then a configuration of forces is admissible if and only if $f = G^\dag w + N \lambda$ for some $\lambda \in \mathcal{M}_\lambda$.
\end{thm}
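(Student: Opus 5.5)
The plan is to establish the two directions of the equivalence using the standard decomposition of solutions of a consistent linear system, and then check the manifold claim separately.

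\textbf{Parametrizing solutions of $Gf = w$.} Since $G \in \nR{6\times 3n}$ is assumed full rank with rank $6$, it is surjective. Hence $G G^\dag = I_6$, and $G^\dag w$ is a particular solution of $G f = w$. The solution set is therefore the affine space $G^\dag w + \ker G$. Because the columns of $N$ form a basis of $\ker G$, every $f$ with $Gf = w$ can be written as $f = G^\dag w + N\lambda$ for a \emph{unique} $\lambda \in \nR{k}$. Using $N^\top N = I_k$, this $\lambda$ is given explicitly by $\lambda = N^\top (f - G^\dag w)$. Conversely, every $f$ of this form satisfies $Gf = GG^\dag w + GN\lambda = w$.

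\textbf{Nonzero individual forces.} Next we note that $P_i$ extracts the $i$-th block, so $f_i = P_i f$. For $f = G^\dag w + N\lambda$ this gives $f_i = P_i G^\dag w + P_i N \lambda$. The condition $f_i \neq 0$ for every $i$ is therefore exactly the defining condition of $\mathcal{M}_\lambda$. Putting the two steps together yields both directions. If $f$ is admissible, then $f = G^\dag w + N\lambda$ with $\lambda = N^\top(f - G^\dag w)$, and $\lambda \in \mathcal{M}_\lambda$. If $\lambda \in \mathcal{M}_\lambda$, then $f = G^\dag w + N\lambda$ satisfies $Gf = w$ and $f_i \neq 0$ for all $i$, so it is admissible.

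\textbf{Manifold structure.} The statement also calls $\mathcal{M}_\lambda$ a manifold, so I would verify this. Each map $\lambda \mapsto P_i G^\dag w + P_i N\lambda$ is continuous (indeed affine). Hence $\mathcal{M}_\lambda$ is the intersection of $n$ preimages of the open set $\nR{3}\setminus\{0\}$, which is a finite intersection of open sets. It is thus an open subset of $\nR{k}$ and so a smooth $k$-dimensional submanifold.

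The only potential subtlety is the identity $GG^\dag = I_6$, which relies on the full-rank assumption. Without full rank one would instead need $w \in \image{G}$. Since the paper assumes full rank, this step is not a real obstacle.
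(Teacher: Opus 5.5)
Your proof is correct and follows essentially the same two-direction argument as the paper. In the forward direction you check $Gf = GG^\dag w + GN\lambda = w$ and $f_i = P_i G^\dag w + P_i N \lambda \neq 0$; in the reverse direction you place $f - G^\dag w$ in $\ker G = \image{N}$ and read off $\lambda \in \mathcal{M}_\lambda$. Your extra points are sound refinements of steps the paper leaves implicit: the reliance of $GG^\dag w = w$ on full rank (or $w \in \image{G}$), the explicit formula $\lambda = N^\top(f - G^\dag w)$, and the openness of $\mathcal{M}_\lambda$ in $\nR{k}$.
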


\begin{proof}
    For the forward direction, let $\lambda \in \mathcal{M}$ and let $f = G^\dag w + N \lambda$.
    Then the wrench is $Gf = G G^\dag w + G N \lambda = w$, and each individual force components is $f_i = P_i f = G^\dag w + N \lambda \neq 0$, so the force configuration is admissible.

    For the reverse direction, consider an admissible force configuration $f = (f_1,...,f_n) \in \nR{3n}$.
    Since $f$ is admissible, $G f = w$ and thus $G(f - G^\dag w) = 0$, meaning that $(f - G^\dag w) \in \image{N}$, and $\lambda$ is well-defined by $N \lambda = f - G^\dag w$.
    Since $f_i = P_i f \neq 0$, it follows that $P_i G^\dag w + P_i N \lambda \neq 0$, and thus $\lambda \in \mathcal{M}_\lambda$.
\end{proof}

The preceding theorem shows that trajectories of admissible force configurations are exactly equivalent to trajectories in the manifold $\mathcal{M}_\lambda$.
This motivates us to study the geometry of the manifold, as it directly relates to the possible solutions to our problem.

\begin{lemma}\label{lem:full-rank-nullspace}
    If $n \geq 4$ and every subset of $n-1$  attachment points on the load is non-collinear, then the matrix $P_i N$ is full-rank for every $i$. 
\end{lemma}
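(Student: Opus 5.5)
Our goal is to show that $P_i N \in \nR{3\times k}$ has rank $3$, i.e.\ that its image is all of $\nR{3}$. Since $\image{N} = \ker G$, the image of $P_i N$ is exactly the set of vectors $f_i$ that occur as the $i$-th block of some null force configuration $f \in \ker G$. So it suffices to show: for every $v \in \nR{3}$ there exist $f_j \in \nR{3}$, $j \neq i$, such that $(f_1,\dots,f_n)$ with $f_i = v$ lies in $\ker G$. Written out, this asks for $\{f_j\}_{j\neq i}$ with
\begin{align*}
    \sum_{j\neq i} f_j = -v, \qquad \sum_{j \neq i} b_j^\times f_j = -b_i^\times v .
\end{align*}
In other words, the reduced grasp matrix $G_{-i}$, made of the block columns $j \neq i$, must be able to produce the arbitrary wrench $-(v, b_i^\times v)$. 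It is enough to prove that $G_{-i} \in \nR{6\times 3(n-1)}$ is surjective.

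The surjectivity of $G_{-i}$ will come from the non-collinearity hypothesis. We argue by duality. Suppose $(a,c) \in \nR{6}$ annihilates the image, so that $a^\top f_j + c^\top b_j^\times f_j = 0$ for all $f_j$ and all $j\neq i$. Then for each $j \neq i$ we have
\begin{align*}
    a - b_j^\times c = 0, \quad \text{i.e.} \quad a = b_j \times c .
\end{align*}
Subtracting these relations for two indices gives $(b_j - b_l)\times c = 0$ for all $j,l \neq i$. If $c \neq 0$, every difference $b_j - b_l$ is parallel to $c$, so the $n-1$ points $b_j$, $j \neq i$, lie on one line, which contradicts the assumption. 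Hence $c = 0$, and then $a = b_j \times 0 = 0$. The annihilator is therefore trivial and $G_{-i}$ is surjective.

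The condition $n\geq 4$ is what makes the hypothesis meaningful. It ensures that $n-1 \geq 3$ points remain after removing $b_i$, so that non-collinearity of the remaining points is a genuine condition; with only two remaining points they would always be collinear and the rank could drop. Surjectivity of $G_{-i}$ gives, for every $v$, a null configuration whose $i$-th block is $v$, so $\image{P_i N} = \nR{3}$ and $P_i N$ is full-rank (rank $3$). I expect no real obstacle. The only care needed is to correctly identify $\image{P_iN}$ with the projections of $\ker G$, which relies on $N$ being a basis of the nullspace, and to handle the sign conventions of the cross-product duality.
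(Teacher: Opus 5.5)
Your proposal is correct and follows the same route as the paper: both identify $\image{P_i N}$ with the set of $i$-th blocks of vectors in $\ker G$, and both reduce the claim to surjectivity of the grasp matrix built from the remaining $n-1$ attachment points. The one difference is that you prove this surjectivity from non-collinearity with the annihilator argument ($a = b_j \times c$ for all $j \neq i$ forces $c = 0$), whereas the paper only asserts that $G_{1:n-1}$ has full rank.
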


\begin{proof}
For any given fixed $i$, we may relabel the carriers so that $i = n$.
The matrix $P_n N$ is full rank if and only if, for any fixed $f_n \in \nR{3}$, there exists a $\lambda \in \nR{k}$ such that $P_n N \lambda = f_n$.
Equivalently, there must exist $f \in \nR{3n}$ such that $G f = 0$ and $P_n f = f_n$.
Thus, we let $f_n$ be arbitrary and proceed by constructing a suitable $f$.

By assumption, the first $n-1$ carrier attachment points are non-colinear, and $n -1 \geq 3$, so
\begin{align*}
    G_{1:n-1} := \begin{pmatrix}
        I_3 & \cdots & I_3 \\
        b_1^\times & \cdots & b_{n-1}^\times
    \end{pmatrix} \in \nR{3 \times 3(n-1)}
\end{align*}
has full rank.
For $f \in \nR{3n}$, let $f_{1:n-1} \in \nR{3(n-1)}$ denote its first $3(n-1)$ entries.
Then we require
\begin{align*}
    G f &= G_{1:n-1} f_{1:n-1} + \begin{pmatrix}
        I_3 \\ b_n
    \end{pmatrix} f_n = 0,
\end{align*}
where $f_n = P_n f$.
Since $G_{1:n-1}$ is full rank, then this admits a solution for any $f_n$.
Therefore $P_n N$ is full rank, as required.
\end{proof}

The assumption of the preceding lemma is rather mild, and only fails when all but one of the carriers are attached to the load along a straight line.
For a random placement of the attachment points, the assumption holds almost surely.

\begin{cor}\label{cor:path-connected}
    Under the assumption of Lemma \ref{lem:full-rank-nullspace}, the manifold $\mathcal{M}_\lambda$ is path connected.
\end{cor}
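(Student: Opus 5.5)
Set $c_i := P_i G^\dag w \in \nR{3}$ and $A_i := P_i N \in \nR{3\times k}$. Then
\begin{align*}
\mathcal{M}_\lambda = \nR{k} \setminus \bigcup_{i=1}^n Z_i, \qquad Z_i := \cset{\lambda\in\nR{k}}{A_i\lambda = -c_i}.
\end{align*}
By Lemma \ref{lem:full-rank-nullspace}, each $A_i$ has rank $3$. Hence every $Z_i$ is nonempty and is an affine subspace of $\nR{k}$ of dimension $k-3$, so its codimension is $3$. The corollary therefore reduces to a general-position statement: the complement of finitely many affine subspaces of codimension at least $2$ in $\nR{k}$ is path connected. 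Codimension $3$ is more than enough. Note also that $k = 3n-6 \ge 6$ because $n\ge 4$, so all of these sets are genuine proper subspaces.

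To connect two points $\lambda_0,\lambda_1 \in \mathcal{M}_\lambda$, first try the straight segment. If it misses every $Z_i$, we are done. Otherwise, use a two-segment path $\lambda_0 \to \mu \to \lambda_1$ through an intermediate point $\mu\in\nR{k}$. The segment $[\lambda_0,\mu]$ meets $Z_i$ only if $\mu$ lies in the set
\begin{align*}
C_i(\lambda_0) := \cset{\lambda_0 + s(z-\lambda_0)}{z\in Z_i,\ s\ge 1},
\end{align*}
which is the cone from $\lambda_0$ over $Z_i$ extended beyond it. Since $\lambda_0\notin Z_i$, this set lies inside the affine span of $\lambda_0$ and $Z_i$, which has dimension $k-2$. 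The same holds for the corresponding sets built from $\lambda_1$. So the bad choices of $\mu$ are contained in a finite union of $2n$ affine subspaces of dimension at most $k-2 < k$. Such a finite union has empty interior, so a valid $\mu$ exists. The endpoints $\lambda_0,\lambda_1$ and the point $\mu$ all avoid $\bigcup_i Z_i$, so the resulting polygonal path lies entirely in $\mathcal{M}_\lambda$.

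The main obstacle is the dimension count for the cone sets. One must check that the span of a point and a codimension-$3$ affine subspace has codimension at least $2$; it actually has codimension exactly $2$. This is exactly where full rank of $P_iN$ (Lemma \ref{lem:full-rank-nullspace}) is essential. If instead $A_i$ had rank $1$, then $Z_i$ could be a hyperplane, and removing it could disconnect $\nR{k}$. A cleaner alternative would be to cite the general fact that removing a closed set of codimension at least $2$ (a finite union of submanifolds) from a connected manifold leaves it connected. I would nevertheless give the elementary polygonal construction above so that the argument is self-contained.
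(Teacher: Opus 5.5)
Your proof is correct, but it takes a different route from the paper's.

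The paper removes the non-admissible sets one at a time, $\mathcal{M}_\lambda^{i+1} = \mathcal{M}_\lambda^{i} \setminus Z_i$. At each step it invokes the general fact of Lemma~\ref{lem:connected_submanifold}: deleting an embedded submanifold of codimension at least $2$ from a connected manifold leaves it connected. That lemma is proved by contradiction, with a straight-line-plus-perturbation argument inside a single slice chart.

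You instead use that the $Z_i$ are affine subspaces of $\nR{k}$ and treat all of them at once, globally. The intermediate point $\mu$ only has to avoid the $2n$ sets $C_i(\lambda_0)$ and $C_i(\lambda_1)$, each lying in an affine subspace of dimension $k-2$. Because you allow $s = 1$ in $C_i$, this also forces $\mu \notin Z_i$, so both closed segments really lie in $\mathcal{M}_\lambda$.

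Your route gives a fully elementary and constructive argument. It needs no recursion and no slice charts. It also does not need to check that the removed sets are closed, which the appendix lemma tacitly requires for $\mathcal{M} \setminus \mathcal{S}$ to be open. It produces an explicit admissible path with a single corner, and almost every $\mu$ works. Since $\mathcal{M}_\lambda$ is open, the corner can be rounded if a smooth trajectory is wanted.

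The paper's route buys generality: its lemma applies to curved removed sets of codimension at least $2$, so the argument would survive if the non-admissible sets were not affine. The cost is reliance on a topological lemma whose appendix proof leaves several details implicit, such as the existence of a slice chart meeting both pieces of the assumed disconnection.
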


\begin{proof}
For each $i$, $P_i N$ is full rank due to Lemma \ref{lem:full-rank-nullspace}, and thus the space of $\lambda$ for which $P_i G^\dag w + P_i N \lambda = 0$ is an affine subspace of dimension $k-3 = 3n-9$ in $\nR{k}$.
Recursively define the manifolds $\mathcal{M}_\lambda^i$ by
\begin{align*}
    \mathcal{M}_\lambda^0 &= \nR{k}, \\ 
    \mathcal{M}^{i+1} &= \mathcal{M}^i \setminus \cset{\lambda \in \mathcal{M}^i}{P_i G^\dag w + P_i N \lambda = 0},
\end{align*}
for $i = 1,...,n$.
At each step, the manifold $\mathcal{M}_\lambda^i$ is of dimension $3n-6$, while the subset that is removed is an embedded submanifold of dimension $3n-9$.
It follows from Lemma \ref{lem:connected_submanifold} that each $\mathcal{M}_\lambda^i$ is path-connected, and thus, in particular, $\mathcal{M}^n = \mathcal{M}_\lambda$ is path-connected.
\end{proof}

Corollary \ref{cor:path-connected} shows that $\mathcal{M}_\lambda$ is path-connected.
This means that, given any two admissible force configurations, it is possible to construct a trajectory between them such that the force configuration remains admissible at all times.
Due to the complex interaction between carrier forces, this result is not trivial without the insight gained from differential geometric modeling.

\subsection{Periodic Trajectories as Immersions}

In this section we show how the stated problem may be reinterpreted as finding a smooth immersion of the unit circle $S^1$ into the manifold $\mathcal{M}_\lambda$.
For any $\lambda \in \mathcal{M}_\lambda$, the bearing of the $i^\text{th}$ carrier is uniquely given by
\begin{align}\label{eq:q_from_lambda}
    q_i(\lambda) = \pi(P_i G^\dag w + P_i N \lambda) \in S^2.
\end{align}
Our stated goal is to identify the smooth periodic trajectories of $T_i$ and $q_i$ such that $\dot{q}_i \neq 0$ and the resulting forces are always admissible.
Since admissibility is guaranteed for all $\lambda \in \mathcal{M}_\lambda$, what remains is to find the smooth periodic trajectories of $\lambda(t) \in \mathcal{M}_\lambda$ such that the $q_i$ obtained from \eqref{eq:q_from_lambda} has nonzero velocity for all time.

The differential of $q_i$, as a function of $\lambda$, is given by
\begin{align*}
    \mathrm{D} q_i(\lambda) 
    &= \mathrm{D}_\lambda \pi(P_i G^\dag w + P_i N \lambda) \\
    &= \mathrm{D} \pi(P_i G^\dag w + P_i N \lambda) P_i N  \\
    &= \frac{(I_3 - \pi(P_i G^\dag w + P_i N \lambda) \pi(P_i G^\dag w + P_i N \lambda)^\top) P_i N}{\vert P_i G^\dag w + P_i N \lambda \vert}  \\
    &= T_i^{-1} (I_3 - q_i q_i^\top) P_i N.
\end{align*}
The matrix $I_3 - q_i q_i^\top$ is the projector onto the 2D subspace orthogonal to $q_i$ and has rank 2.
Thus, if the matrix $P_i N$ has full rank, (e.g. under the assumptions of Lemma \ref{lem:full-rank-nullspace}), then the rank of $\mathrm{D} q_i(\lambda)$ is 2 for all $\lambda \in \mathcal{M}_\lambda$.
This is full rank with respect to the dimension of the tangent space of $q_i$ in $S^2$, which means that, for any given $q_i(\lambda(t))$, the velocity of $\lambda$ can be chosen to achieve any desired velocity of $q_i$.

\begin{thm}
If $\lambda : S^1 \to \mathcal{M}_\lambda$ and $q_i\circ\lambda : S^1 \to S^2$ are smooth immersions (for each $i = 1,...,n$), then the $q_i$ and $T_i$ trajectories corresponding to $\lambda$ solve Problem~\ref{problem}.
\end{thm}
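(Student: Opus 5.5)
The plan is to parametrize $S^1$ by time and check the four conditions of Problem~\ref{problem} in turn. Fix a period $\tau>0$ and let $\gamma:\mathbb{R}\to S^1$, $\gamma(t)=(\cos(2\pi t/\tau),\sin(2\pi t/\tau))$. This map is a smooth $\tau$-periodic immersion. Define $\lambda(t):=\lambda(\gamma(t))$, and set
\begin{align*}
f(t) &= G^\dag w + N\lambda(t), \\
T_i(t) &= \Vert P_i f(t)\Vert, \\
q_i(t) &= \pi(P_i f(t)) = q_i(\lambda(t)),
\end{align*}
using \eqref{eq:q_from_lambda}. All of these trajectories are $\tau$-periodic because $\gamma$ is.

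For smoothness, note that $\lambda\circ\gamma$ is smooth and $f$ is affine in $\lambda$. Because $\lambda(t)\in\mathcal{M}_\lambda$, each $P_i f(t)\neq 0$. On $\mathbb{R}^3\setminus\{0\}$ both the norm and $\pi$ are smooth, so $T_i$ and $q_i$ are smooth. This gives condition 1. Condition 3 follows from the same fact, since $T_i=\Vert P_i f\Vert>0$. For condition 2, the preceding theorem says $f(t)$ is admissible, so $Gf(t)=w$. Moreover $T_i q_i = P_i f = f_i$, so the cable forces built from these $q_i$ and $T_i$ are exactly $f(t)$, and the exerted wrench equals $w$.

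The only step with real content is condition 4, nonzero carrier velocity. The carrier velocity is $\dot p_i = l_i\dot q_i(t)$ with $l_i>0$, so it suffices to show $\dot q_i(t)\neq 0$. By the chain rule,
\begin{align*}
\dot q_i(t) = \mathrm{D}(q_i\circ\lambda)(\gamma(t))\big[\dot\gamma(t)\big].
\end{align*}
Here $\dot\gamma(t)\neq 0$ because $\gamma$ has constant speed $2\pi/\tau$. Since $q_i\circ\lambda$ is an immersion, its differential is injective on the one-dimensional tangent space $\mathrm{T}_{\gamma(t)}S^1$, so $\dot q_i(t)\neq 0$ for every $t$ and every $i$.

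This step is also where the hypotheses are actually used. The immersion hypothesis on $q_i\circ\lambda$ carries the argument, while the hypothesis that $\lambda$ is an immersion is not strictly needed; it is in fact implied, because if $\mathrm{D}\lambda$ vanished then so would $\mathrm{D}(q_i\circ\lambda)$. I would remark on this briefly rather than use it. The only care needed is to phrase "smooth periodic" as smoothness of the composition with $\gamma$ and to identify $\mathrm{T}S^1$ with $\mathbb{R}$.
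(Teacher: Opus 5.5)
Your proposal is correct and follows the same route as the paper's proof. Periodicity and smoothness come from the domain $S^1$ together with smoothness of $\pi$ and the norm away from the origin, which $\lambda(t)\in\mathcal{M}_\lambda$ guarantees; the wrench comes from $G(G^\dag w + N\lambda)=w$; positive tension comes from membership in $\mathcal{M}_\lambda$; and nonzero carrier velocity comes from injectivity of the differential of $q_i\circ\lambda$. You additionally make the time parametrization $\gamma$ and the chain-rule step explicit, and your side remark that the immersion hypothesis on $\lambda$ is implied by the one on $q_i\circ\lambda$ is correct.
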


\begin{proof}
Note that $\lambda$ is a map from $S^1$, the unit circle, and hence is necessarily periodic.
Since $\lambda$ is also smooth, the $q_i$ and $T_i$ are also smooth and periodic.
By definition of the force corresponding to $\lambda$, the exerted wrench is always $G f = G(G^\dag w + N \lambda) = w$.
Since $\lambda(t) \in \mathcal{M}_\lambda$, it also holds that $T_i(t) > 0$ for all $t$.
Finally, to see that the velocity in each carrier is always non-zero, simply note that $q_i \circ \lambda$ is an immersion, meaning its derivative is always injective and, in particular, never zero.
\end{proof}

Note that an immersion does not need to be an embedding.
The difference is that the image of an immersion does not need to be a submanifold of its codomain; in other words, the path traced by $q_i \circ \lambda$ may intersect with itself.
However, an immersion is required to be smooth, meaning it may not `stop' or `turn sharply'.

\subsection{A Linear Solution}\label{sec:lin_sol}

This section demonstrates a simple linear algebra approach to finding an explicit design for $\lambda : S^1 \to \mathcal{M}_\lambda$ such that each carrier's motion $\ddt q(\lambda(t))$ is never zero.
We will assume that each $P_i N$ is full rank (see Lemma~\ref{lem:full-rank-nullspace}.
Consider $\lambda$ defined by the formula
\begin{align}\label{eq:linear_solution}
    \lambda(t) = A \mu(t),
    && \mu(t) := \begin{pmatrix}
        \cos(t) \\ \sin(t)
    \end{pmatrix}
\end{align}
where $A \in \nR{3n-6 \times 2}$ is a design parameter.
In order that this induces periodic motions of each of the $q_i$, we require that
\begin{enumerate}
    \item The resulting forces $\lambda(t)$ belong to the manifold $\mathcal{M}_\lambda$.
    \item The carriers are non-stopping, $\ddt q_i(\lambda(t)) \neq 0$.
\end{enumerate}

Beginning with the second condition, we have that $\ddt q_i(\lambda(t)) = 0$ if and only if
\begin{align*}
    T_i^{-1} (I_3 - q_i q_i^\top) P_i N A \dot{\mu} = 0.
\end{align*}
This occurs exactly when $P_i N A \dot{\mu}$ and $q_i$ are parallel, or rather,
\begin{align*}
    P_i N A \dot{\mu} = \alpha P_i(N A \mu + G^\dag w),
\end{align*}
for some scalar $\alpha \in \nR{}$.
To preclude the possibility that $P_i N A \dot{\mu} = 0$, it is necessary and sufficient that $P_i N A$ is rank 2, which is almost certain for a randomly chosen $A$.
Then, the equation admits a solution for $\alpha \neq 0$ only if
\begin{align*}
    P_i G^\dag w = P_i N A(\alpha^{-1}\dot{\mu} + \mu) \in \image{P_i N A}.
\end{align*}
This is a 2D subspace of $\nR{3}$, and thus the probability that $P_i G^\dag w$ lies in $\image{P_i N A}$ is zero.
The only exception is when $P_i G^\dag w = 0$, but in this case there is still no solution since $ P_i N A(\alpha^{-1}\dot{\mu} + \mu) \neq 0$ for all time.
This shows that a randomly chosen matrix $A$ will almost certainly satisfy the second condition.

For the first condition, $\lambda$ fails to belong to $\mathcal{M}_\lambda$ if
\begin{align*}
    P_i G^\dag w + P_i N A \mu = 0,
\end{align*}
which similarly requires that $P_i G^\dag w \in \image{P_i N A}$.
This also has probability zero for a randomly chosen matrix $A$.
We conclude that if $\lambda(t)$ is determined by \eqref{eq:linear_solution} for a randomly chosen $A$, then it solves Problem~\ref{problem} almost surely.

\section{Simulations}

\begin{figure*}[t]
    \centering
    \begin{subfigure}[]{0.9\textwidth}      \centering
        \includegraphics[width=0.45\textwidth]{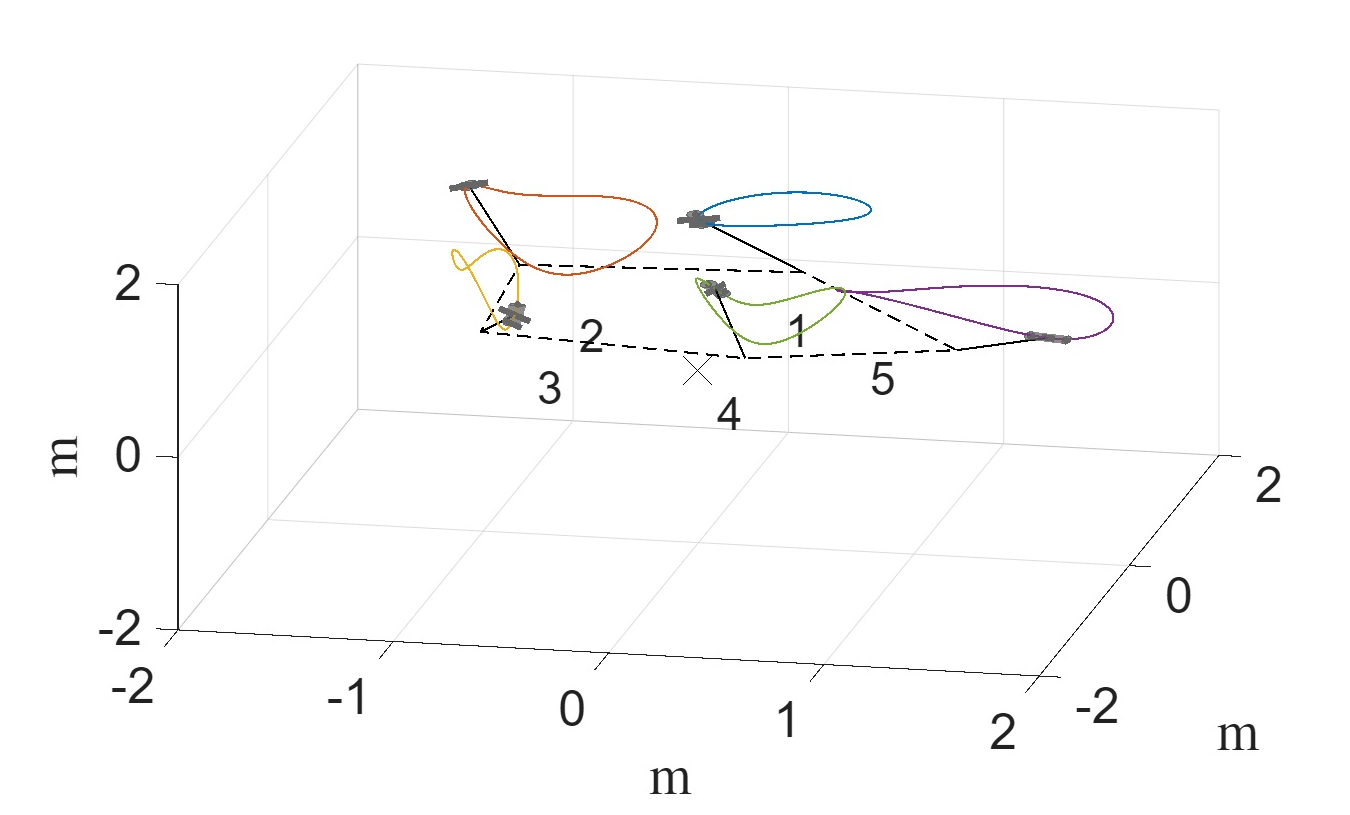}\qquad
        \includegraphics[width=0.40\textwidth]{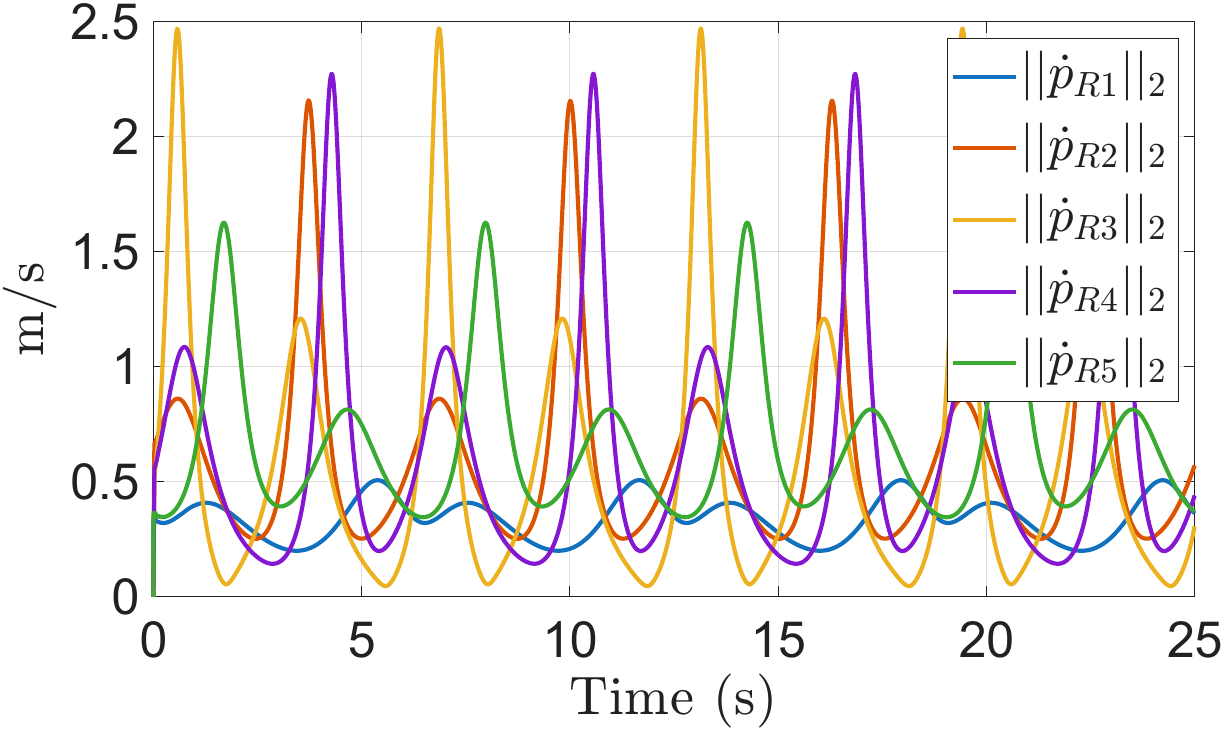}\\
         \includegraphics[width=0.45\textwidth]{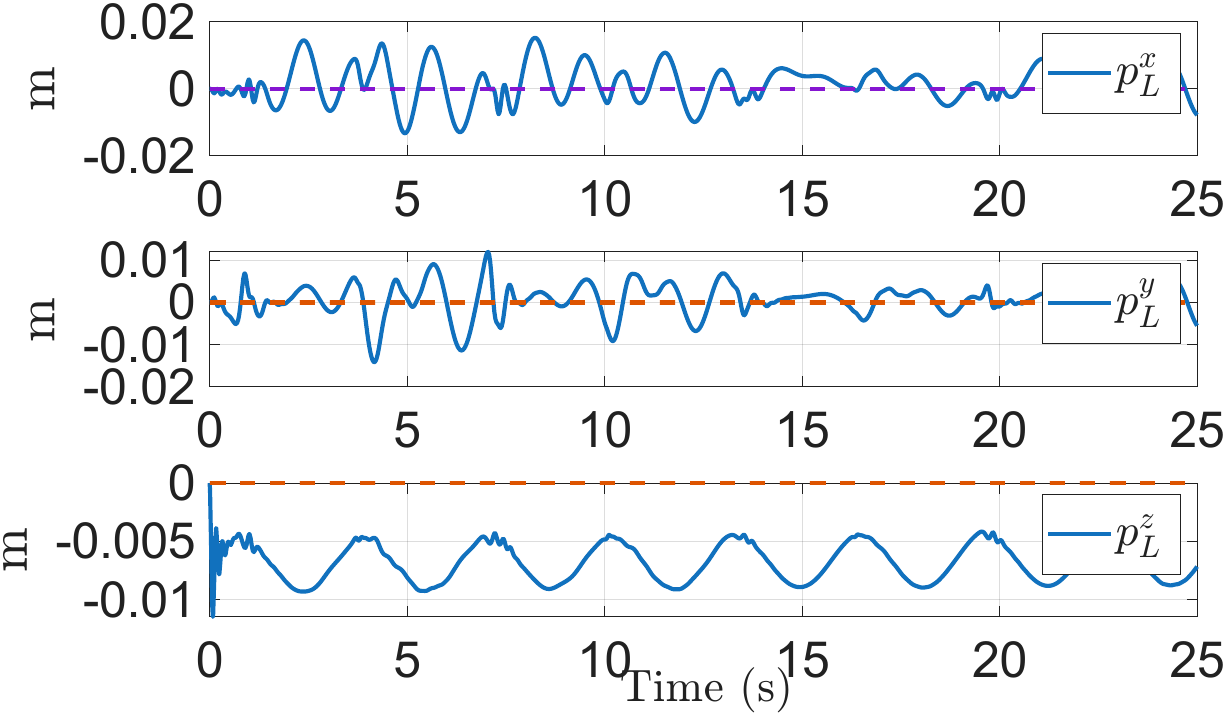}
          \qquad\includegraphics[width=0.45\textwidth]{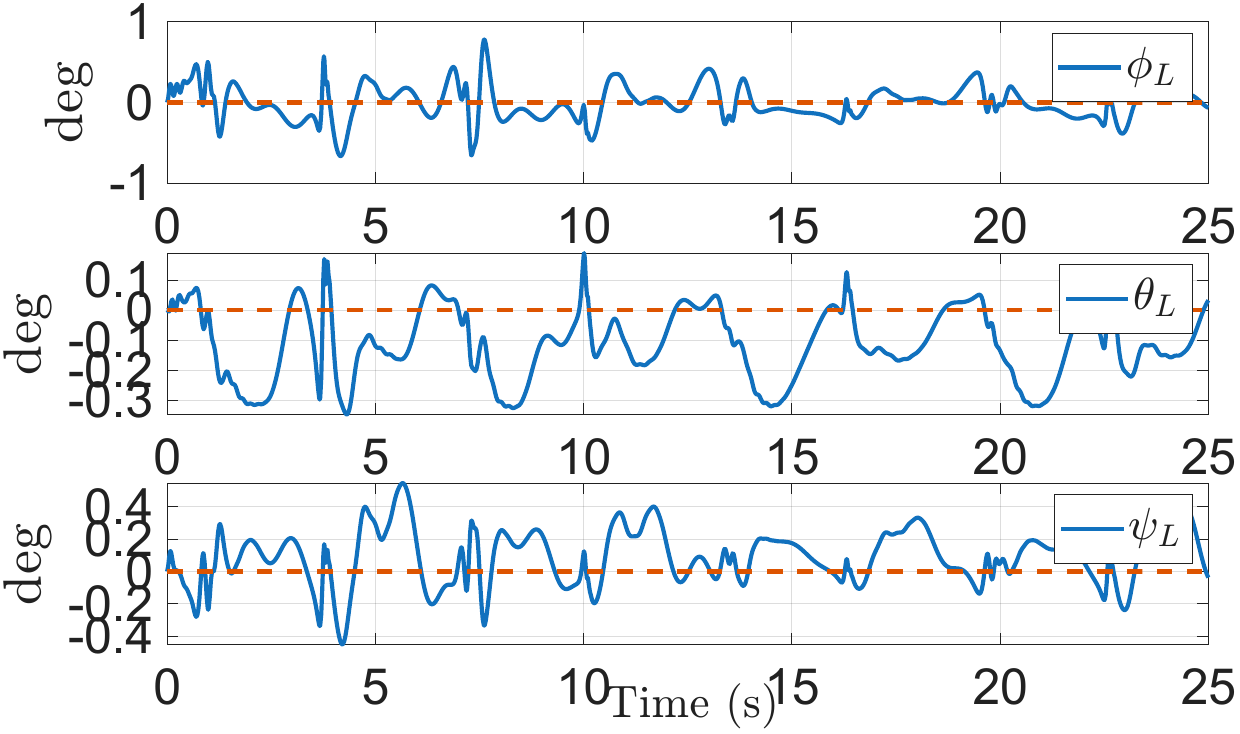}
        \caption{$n=5$\label{fig:n5}}
        \label{fig:group1}
    \end{subfigure}
    \hfill
    \begin{subfigure}[]{\textwidth}  \centering   \includegraphics[width=0.45\textwidth]{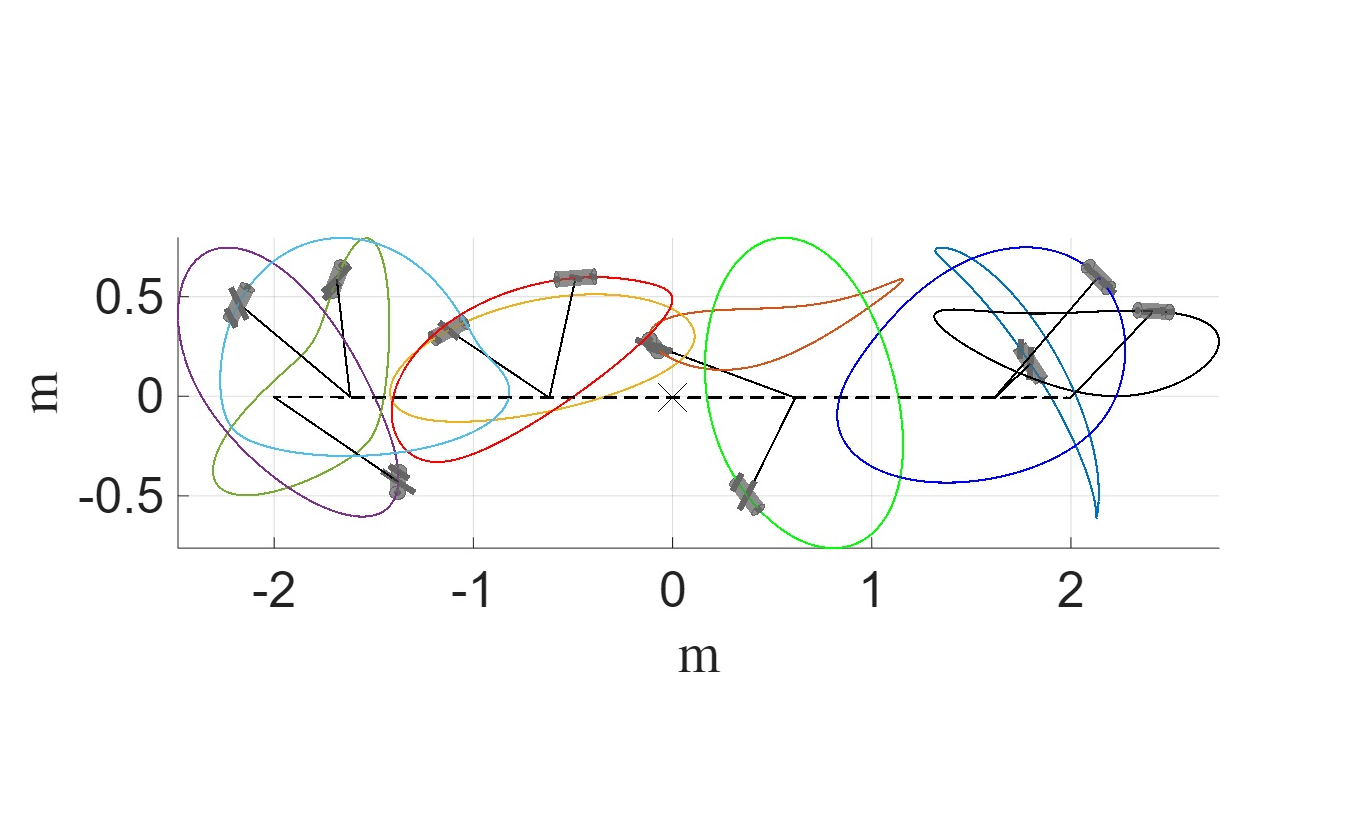}\qquad
        \includegraphics[width=0.40\textwidth]{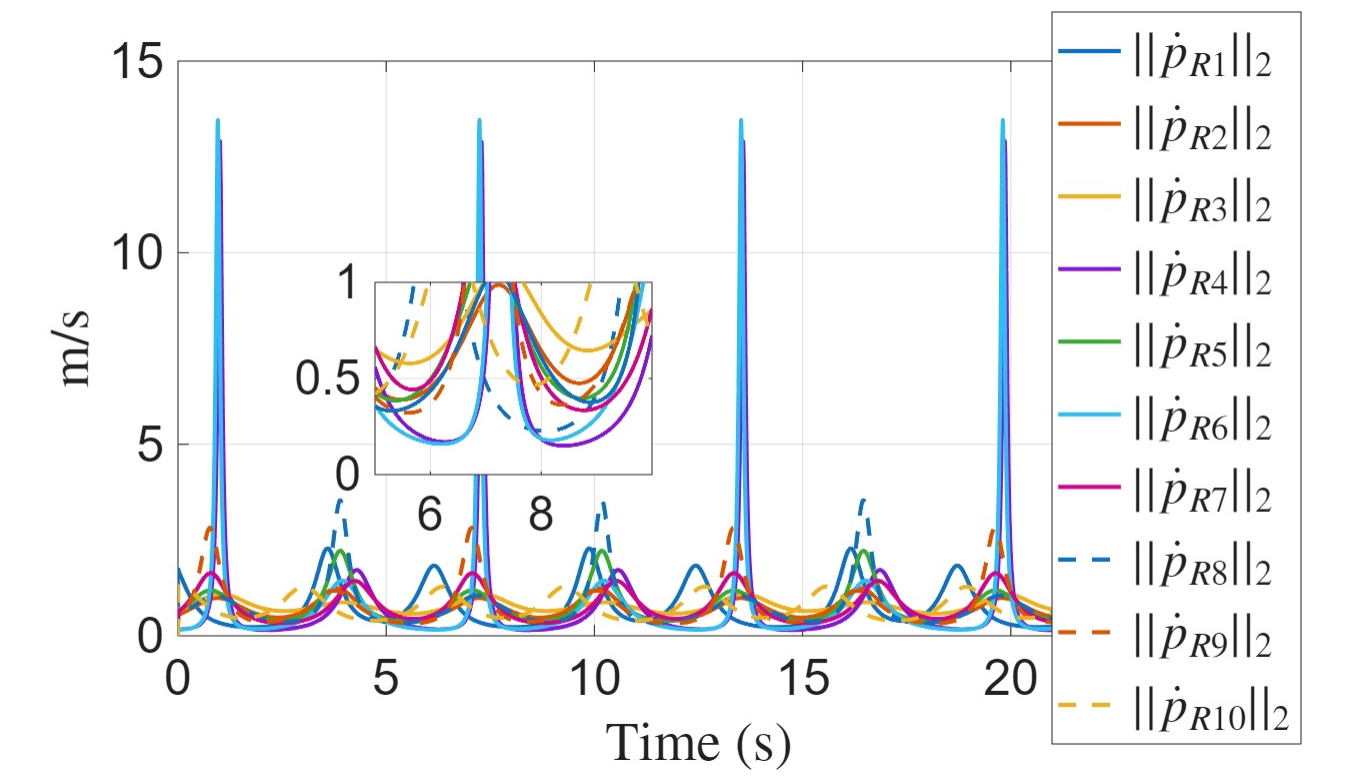}\\
         \includegraphics[width=0.45\textwidth]{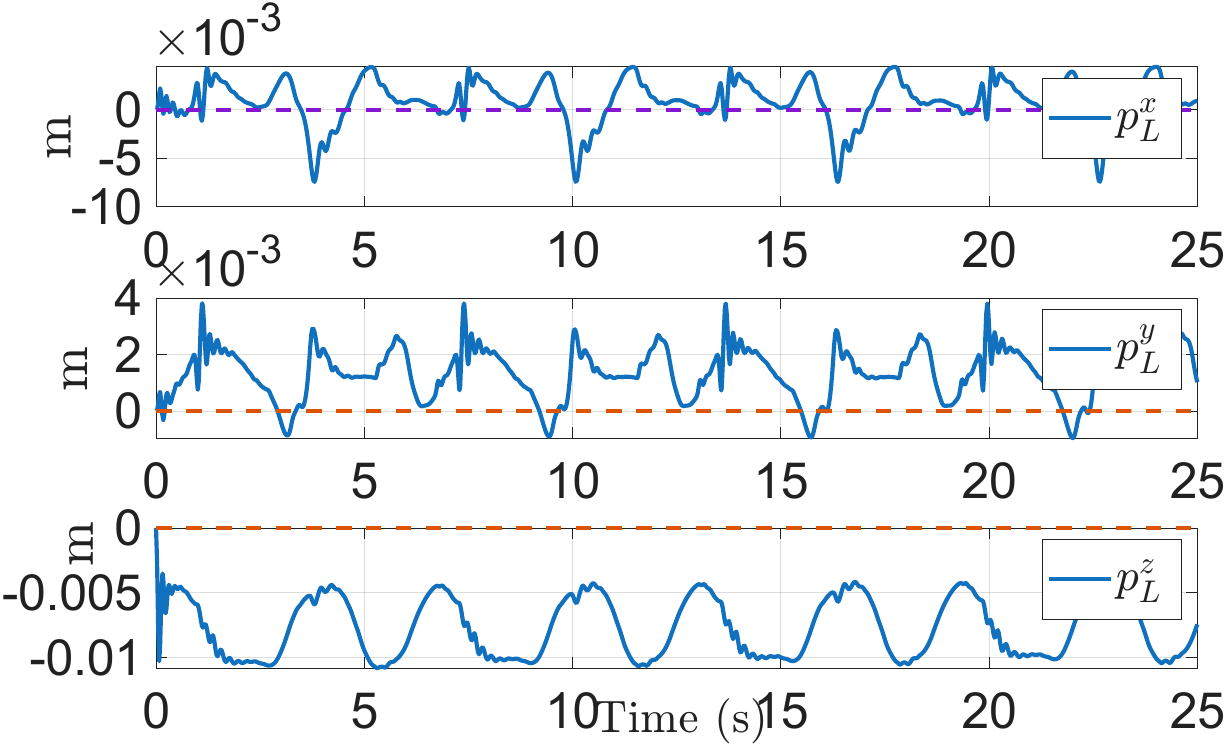}
         \qquad \includegraphics[width=0.45\textwidth]{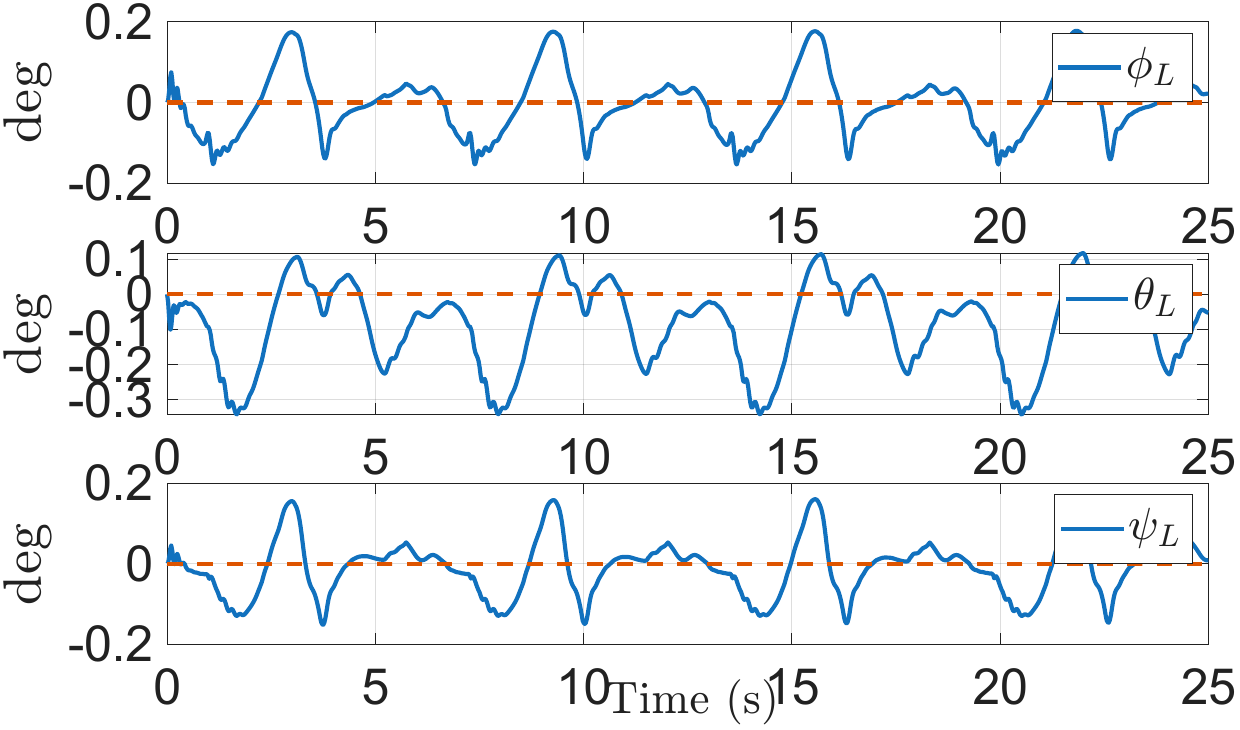}
        \caption{$n=10$, with cables attached uniformly around the center of mass of the load. At the top left: XZ side view of the system. \label{fig:n10}}
    \end{subfigure}
    
    \caption{Results of two simulations. A representation of the system shows the colored orbits of the carriers, on which the carriers are represented in grey; black solid lines are the cables and black dashed lines connect the cable attachment points on the load, which center of mass is a black cross. $p_L^{X,Y,Z}$ are the X-, Y-, and Z- coordinated of the position of $\{O\}$ expressed in $\{W\}$, while $\phi_L,\theta_L,\psi_L$ the roll, pitch, and yaw angles extracted from $R$.}
    \label{fig:main}
\end{figure*}

In this section, we demonstrate an application of the proposed theory to generate trajectories that were not possible to achieve via prior methods.
The simulation implements realistic position control for the carriers and dynamics for the manipulated load to demonstrate the applicability of the proposed methodology.
The code used to generate the simulation results is available at {\footnotesize \url{https://codeberg.org/CGabelli/Coordinated-Trajectories-Nonstop-Carriers-Cable-Suspended-Load}}.

\subsection{Setup}
The trajectory planning method proposed in \eqref{eq:linear_solution} has been tested in a MATLAB/Simulink environment. A rigid-body load of mass $0.5 \unit{\kilogram}$ and rotational inertia $0.01I_3\unit{\kilogram\cdot\meter^2}$ is manipulated by $n$ cables attached to aerial carriers. The carriers are modelled as gravity-compensated second-order controlled systems subject to the cable force acting on the carrier, $-f$. The desired positions and velocities of the carriers are indicated as $p_i^d$, $\dot{p}_i^d$, and are computed with the proposed method from the system kinematics in \eqref{eq:kinematics}. Specifically, the $i$-th carrier's dynamics is given by
$$ \ddot{{p}}_i = M_i^{-1} \left(-{f}_i + K_{1}  \dot{{e}}_i + K_2  {e}_i +K_3 \int_0^t{{e}_i d\tau} \right),$$
with ${e}_i=({p}_i ^d - {p}_i)$, $M_i=0.01 I_3\unit{\kilogram}$ 
the inertia of the carrier, and $K_1=500 I_3\unit{\kilogram\per\second^2}$, $K_2=10 I_3\unit{\kilogram\per\second}$, and $K_3=20 I_3\unit{\kilogram\per\second^3}$ the control gains. Note that the cable force is assumed uncompensated, and so it is treated as an external disturbance rejected by the trajectory-tracking controller. The tracking could be further improved by compensating the cable force, if one assumed it is measured or estimated, as done, e.g., in \cite{gabellieri2023equilibria}.
To model non-idealities of real cables and for the modularity of the simulator, each cable is represented by a spring-damper system, while the trajectory generation algorithms ignores the cable elongation. The cable rigidity is ${K_c=700\unit{\kilogram\per\second^2}}$, the damping coefficient is ${B_c=1\unit{\kilogram\per\second}}$, and rest length is ${l_0=0.8\unit{\meter}}$. Define an inertial frame $\{W\}$, the load pose is initialized at zero position and attitude $R=I_3$, where $R\in SO(3)$ expresses the orientation of $\{O\}$ w.r.t. $\{W\}$, and it should be kept constant during the simulation. 

\subsection{Results}
Figures \ref{fig:main} show the results of the linear solution from Sec. \ref{sec:lin_sol}, where matrix $A$ has been generated picking one out of 1000 random matrices of entries in $[0,1]$ with the lowest condition number. In particular, Figure \ref{fig:n5} reports results for a system with $n=5$ cables. Each cable's attachment point on the load is chosen such that its position is ${b_i=[R_z(\frac{2\pi}{n}i+0.2r)[1.2,0]^\top, r']^\top}$, where $r$, is a random angle in $[0, -0.2]\unit{\radian}$ and $r'$ a random length in $[0,1]\unit{\meter}$.
Figure \ref{fig:n10} shows the results obtained for $r,r'$ equal to zero, namely, for cables attached on a circle of radius $1.2\unit{\meter}$ around the load center of mass. Such an example is relevant for practical applications, where it may be desirable to attach the cables uniformly around the center of mass of an extended rigid body and to maintain it horizontal.

\subsection{Discussion}
In all simulation results, the load's pose remains steady during the manipulation, while the carriers keep travelling with non-zero velocity.
Small oscillations in the load pose are caused by the simulated carrier dynamics, resulting in slight deviations of their trajectories from the planned ones.
Furthermore, a small error in the vertical positioning of the load is due to the uncompensated and varying cable elongation, of which the algorithm is unaware, as it is realistic to assume in practical applications. 
The error depends on the cable stiffness, which is likely very high for real cables, leading to negligible elongations. 
Such an error in the load pose is expected, since the proposed method does not close the loop on the object's pose but generates offline trajectories that are then tracked in the simulator by a controller that closes the loop on the carriers' state.
In the future, the proposed trajectory generation method may be integrated in an online load pose controller similarly to \cite{girardello2025trajectory} if one wishes to exploit it to regulate the load pose along a time-dependent trajectory.

The example shown in Figure \ref{fig:n10} shows that the carriers' paths cross the plane on which the attachment points on the load lie, which would not be possible using previous methods \cite{gabellieri2025coordinated}, where the planned carrier trajectories were constrained to one half-space by construction. 
The results show that the proposed method successfully plans for the first time general non-stopping trajectories for the carriers capable of keeping the load pose unperturbed. The increased workspace of the carriers opens new possibilities for future research; e.g., it could be exploited for planning in cluttered environments or for modulating the amount of the external wrench $w$ compensated by some of the carriers, in response to energy saving requirements.

\addtolength{\textheight}{-5cm}   

\section{Conclusions}

This work explored the problem of using multiple carriers with nonzero velocity and smooth periodic trajectories to manipulate a load through cable attachments.
We showed that the problem of finding suitable trajectories could be translated to finding an immersion from the unit circle $S^1$ into the manifold of admissible configurations $\mathcal{M}_\lambda$.
Moreover, we showed that $\mathcal{M}_\lambda$ is connected, meaning any two admissible configurations can be connected by a smooth trajectory, and that the map $q_i : \mathcal{M}_\lambda \to S^2$ from the admissible configuration to a given individual carrier is full rank.
Finally, we presented a simple linear solution based on our theory and simulation results that show it is able to achieve trajectories not possible using prior methods.
This work provides significant new insights into the multiple carrier problem that can be used to support novel control strategies in future work.


\bibliographystyle{IEEEtran}
\bibliography{biblio}

\section*{Appendix}

\begin{lemma}\label{lem:connected_submanifold}
    Let $\mathcal{M}$ be a connected manifold of dimension $n$ and let $\mathcal{S} \subset \mathcal{M}$ be an embedded submanifold of dimension at most $n-2$.
    Then $\mathcal{M} \setminus \mathcal{S}$ is a submanifold of $\mathcal{M}$ with dimension $n$, and it is (path-)connected.
\end{lemma}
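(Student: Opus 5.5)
The first claim is immediate. An embedded submanifold $\mathcal{S}$ of dimension at most $n-2 < n$ is closed in $\mathcal{M}$ when it is properly embedded, which is the case in the application, where each removed set is an affine subspace intersected with an open set. So $\mathcal{M}\setminus\mathcal{S}$ is an open subset of $\mathcal{M}$, and open subsets of an $n$-manifold are $n$-dimensional (open) submanifolds. For the general statement I will assume $\mathcal{S}$ is closed, as is implicit in the usage in Corollary~\ref{cor:path-connected}.

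For path-connectedness I will use a transversality (general position) argument. Since $\mathcal{M}$ is a connected manifold, it is path-connected, and in fact any two points can be joined by a smooth path.

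Fix $x,y \in \mathcal{M}\setminus\mathcal{S}$ and a smooth path $\gamma:[0,1]\to\mathcal{M}$ from $x$ to $y$. The key step is to perturb $\gamma$, keeping its endpoints fixed, so that it misses $\mathcal{S}$. By the parametric transversality theorem (or Thom transversality applied to paths with fixed endpoints), $\gamma$ can be approximated by a smooth path $\tilde\gamma$ with the same endpoints that is transverse to $\mathcal{S}$. Since $\dim [0,1] + \dim \mathcal{S} \le 1 + (n-2) < n$, transversality forces $\tilde\gamma([0,1]) \cap \mathcal{S} = \emptyset$. Away from the endpoints the perturbation is free. Near the endpoints no perturbation is needed: $x,y \notin \mathcal{S}$ and $\mathcal{S}$ is closed, so small neighbourhoods of $x$ and $y$ already avoid $\mathcal{S}$.

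To keep things concrete, the perturbation can be done locally. Cover $\gamma([0,1])$ by finitely many slice charts $\phi_j : U_j \to \mathbb{R}^n$ in which $\mathcal{S}\cap U_j$ becomes $\mathbb{R}^{s}\times\{0\}$ with $s\le n-2$ (or charts disjoint from $\mathcal{S}$). Subdivide $[0,1]$ so that each piece lies in one chart. Within a chart, replace each segment by a piecewise-linear path whose intermediate nodes avoid the subspace. A line segment meets a codimension-$\ge 2$ linear subspace only for a measure-zero (indeed codimension $\ge 1$) set of choices of its node. So generic nodes avoid $\mathcal{S}$, and the nodes at chart transitions can be chosen generic as well.

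The main obstacle is this bookkeeping of the local perturbations so that consecutive pieces still glue together and remain inside their charts. I would handle it by choosing each intermediate node in a small open ball, which stays inside the relevant charts. The set of bad nodes in that ball is a finite union of sets of dimension at most $n-1$, so good choices exist, and the resulting continuous path from $x$ to $y$ in $\mathcal{M}\setminus\mathcal{S}$ establishes path-connectedness.
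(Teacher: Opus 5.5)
Your proposal is correct, but it reaches the conclusion by a different route than the paper. The paper argues by contradiction. It assumes a separation $\mathcal{M}\setminus\mathcal{S}=A_1\sqcup A_2$ and asserts that some slice chart $U$ with $\varphi(U)$ a ball must meet both $A_1$ and $A_2$ (otherwise $\mathcal{M}$ itself would split). It then shows that a ball minus a linear subspace of codimension at least $2$ is connected by straight segments: whenever a segment would hit the slice, it perturbs the endpoint in a direction $z$ with $Pz$ not parallel to $Px_0$. You instead join $x$ to $y$ by a path in $\mathcal{M}$ and push it off $\mathcal{S}$, either by the transversality extension theorem (using $1+(n-2)<n$) or by chaining slice charts with generic nodes.

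Both arguments rest on the same local fact: for a fixed start point off the slice, the bad endpoints lie in a set of dimension at most $n-1$. The paper applies it inside a single chart, so it needs no gluing. The price is a terse global step, and it obtains only connectedness; path-connectedness then follows from local path-connectedness of the open set $\mathcal{M}\setminus\mathcal{S}$. Your route yields path-connectedness directly and constructively. In the setting of Corollary~\ref{cor:path-connected}, where $\nR{k}$ is convex and the removed sets are finitely many affine subspaces of codimension $3$, a single generic intermediate node avoids all of them at once, with no recursion. The cost of your route is the bookkeeping you flag. To handle it, take the chart images to be balls (or subdivide finely enough) so that each straight segment stays inside its chart.

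You are also right to make closedness of $\mathcal{S}$ explicit. Without it the first claim fails: the complement of an open segment in $\nR{3}$ is not open. The paper's proof relies on closedness silently, both when calling $\mathcal{M}\setminus\mathcal{S}$ open and when treating $A_1, A_2$ as open in $\mathcal{M}$.
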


\begin{proof}
    The fact that $\mathcal{M} \setminus \mathcal{S}$ is a submanifold of dimension $n$ is immediate since it is an open subset of $\mathcal{M}$ with an inherited topology.

    We will assume that $\mathcal{M} \setminus \mathcal{S}$ is disconnected and show that this leads to a contradiction.
    If $\mathcal{M} \setminus \mathcal{S}$ is disconnected then it is equal to $A_1 \sqcup A_2$ where $A_1, A_2 \subset \mathcal{M} \setminus \mathcal{S}$ are disjoint open sets.
    These sets are also open in $\mathcal{M}$ by inclusion, and we have that $\mathcal{M} = A_1 \sqcup A_2 \sqcup S$.
    Let $\varphi : U \to \mathbb{R}^n$ be a slice chart for $\mathcal{S}$ in $\mathcal{M}$, i.e.
    \begin{align*}
        &\varphi(U \cap \mathcal{S}) = \\ &\cset{(x^1,...,x^s, x^{s+1}, ..., s^n) \in \varphi(U)}{x^{s+1} = \cdots = x^n = 0},
    \end{align*}
    where $s = \dim \mathcal{M} - \dim \mathcal{S}$. 
    Suppose additionally and $U \cap A_1$ and $U \cap A_2$ are nonempty, noting that such a slice chart must exist, since if it did not, then $\mathcal{M}$ would be disconnected.
    Thus $U \setminus S = (U \cap A_1) \sqcup (U \cap A_2)$ is disconnected as well.
    Without loss of generality, we may assume that the image of $\varphi(U)$ is the open ball of radius 1.
    
    Let $x_0, x_1 \in \varphi(U \setminus \mathcal{S}) = B_1(0)$ be arbitrary.
    We will show that $x_0$ and $x_1$ are connected.
    The straight line connecting $x_0$ and $x_1$ is given by $x(t) = x_0 + t (x_1 - x_0)$.
    Let $P = \begin{pmatrix} 0_{n-s \times s} & I_{n-s} \end{pmatrix} \in \nR{n-s \times n}$ be the projection onto the last $n-s$ coordinates.
    Then $x(t)$ intersects $\varphi(U \cap \mathcal{S})$ if and only if $P (x_0 + t (x_1 - x_0)) = 0$ for some $t \in [0,1]$.
    This requires that $P x_0$ and $P (x_1-x_0)$ are parallel.
    If this is not the case, then $x_0$ and $x_1$ are connected and we are done.
    Otherwise, choose any direction $z$ so that $P z$ is not parallel to $P x_0$, and let $\hat{x}_1 = x_1 + \varepsilon z$ with $\varepsilon > 0$ sufficiently small so that $\hat{x}_1 \in \varphi(U \setminus \mathcal{S})$ and $x_1$ is connected to $\hat{x}_1$ by a straight line.
    Then $P x_0$ is not parallel to $P(\hat{x}_1 - x_0)$ by construction, so $x_0$ is connected to $\hat{x}_1$ and hence to $x_1$.
    
    We have shown that $\varphi(U \setminus \mathcal{S})$ is connected, and therefore $U \setminus \mathcal{S}$ is connected.
    But this is in contradiction to our decomposition of $U \setminus S = (U \cap A_1) \sqcup (U \cap A_2)$.
    Therefore the original manifold $\mathcal{M} \setminus \mathcal{S}$ cannot be disconnected, and hence must be connected.
\end{proof}

\end{document}